\theoremstyle{plain}
\theoremstyle{theorem}
\newtheorem{thm}{Theorem}[section]
\newtheorem{lemma}[thm]{Lemma}
\newtheorem{corollary}[thm]{Corollary}
\theoremstyle{definition}
\newtheorem{definition}[thm]{Definition}
\newtheorem{remark}[thm]{Remark}
\numberwithin{equation}{section}
\title{On a Symmetrization of Diffusion Processes }
\author{Jir\^o Akahori and Yuri Imamura}
\address{Department of Mathematical SciencesCRitsumeikan University \\
1-1-1 Nojihigashi, Kusatsu, Shiga, Japan}
\email{yuri.imamura@gmail.com}
\begin{document}
\maketitle

\begin{abstract}
The latter author, together with collaborators, 
proposed a numerical scheme to calculate
the price of barrier options in \cite{iiko, iio1, iio2}. 
The scheme is based on a symmetrization of diffusion process. 
The present paper aims to give a mathematical credit to the use of 
the numerical scheme for Heston or SABR type 
stochastic volatility models. This will be done by 
showing a fairly general result on the symmetrization 
(in multi-dimension/multi-reflections). 
Further applications (to time-inhomogeneous diffusions/
to time dependent boundaries/to curved boundaries) 
are also discussed. 
\end{abstract}

\section{Introduction}
Recently, the latter author, together with Yuta Ishigaki, 
Takuya Kawagoe and Toshiki Okumura,
introduced a new numerical scheme for calculating 
the price of barrier options in a series of papers \cite{iiko, iio1, iio2}. 
The scheme is based on what they call ``put-call symmetry", 
a notion introduced by Peter Carr and Roger Lee \cite{peter}
in relation to (a generalization of) static hedging of barrier options. 

The put-call symmetry, PCS for short, 
of a diffusion process $ X $ 
at a point $ K \in \mathbf{R} $,  
means 
the equivalence in law 
between $ K - X_t $(put) and $ X_t -K $ (call) 
for arbitrary $ t \geq \tau_K $, where 
$ \tau_K $ is the first hitting time at $ K $. 
This is much weaker than the {\em reflection principle} 
which has been widely-recognized as a fundamental requirement
for the static hedging of barrier options. 
While the put-call symmetry is still something one cannot expect 
for a given diffusion without luck, 
the latter author and her collaborators have noticed that, 
for any given diffusion 
\begin{itemize}
\item one can construct (easily) another diffusion 
that is identical with 
the diffusion up to the first hitting time 
and satisfies the put-call symmetry,  
\item by which the price of barrier-type options 
written on the original diffusion is expressed by 
a combination of the prices of plain options 
written on the constructed one.
\item The fact in turn implies that the ``symmetrization"
offers a new numerical scheme for calculating 
the price of barrier options; it transforms path-dependent 
expectations to path-independent ones. 
\end{itemize}
The numerical experiments they performed show that 
the scheme is quite plausible. They also claim in \cite{iio1,iio2} that
it can be applicable to stochastic volatility models where 
the stock and its volatility are described by a two-dimensional diffusion. 

In this paper, we give a mathematical backgrounds 
for the scheme by establishing some symmetry results 
in a more general setting under the action of 
a reflection group. This in turn leads to further possible applications 
of the scheme. We will show that it can be used for the cases with 
time-inhomogeneous diffusions, 
time-dependent boundaries, as well as curved boundaries. 

This paper is organized as follows. We start with 
a detailed discussion of the put-call symmetry (section \ref{PCS}). 
After recalling the one-dimensional cases (subsection \ref{1dim}), 
we first introduce a multi-dimensional generalization (subsection \ref{MDG})
and then extends it to multi-reflections (subsection \ref{PCSwRG}).   
The key assumptions are found in the statement of Lemma \ref{KEY1}. 
Section \ref{SYMRZ} is devoted to discussions on the `symmetrization". 
Again staring from a review of one-dimensional cases (subsection \ref{1dim2}), 
we give the main result (Theorem \ref{main}) in full generality 
in subsection \ref{FGT}. Applications of the main theorem are presented in 
section \ref{APP}. We first give a credit to the use 
of the scheme for stochastic volatility models 
including Heston's and SABR type (subsection \ref{SVMS}).
The trick for the generic stochastic volatility models is 
extended to get a generalization of Theorem \ref{main} (Theorem \ref{exmain}), 
as a corollary to which we show that the scheme is applicable even to 
time-inhomogeneous diffusions (Corollary \ref{TIH}).  
This observation further enables the applications to 
curved boundary cases (subsection \ref{CVB}) and time-dependent boundary cases 
(subsection \ref{TDB}).

\section{The put-call symmetry}\label{PCS}

\subsection{One dimensional case revisited}\label{1dim}
Let $ X $ be a one dimensional diffusion satisfying 
the put-call symmetry at $ K \in \mathbf{R} $; 
\begin{equation}\label{PCS1}
X_t 1_{ \{ \tau_K \leq t \} }
\overset{\mathop{d}}{=} (2K -X_t) 1_{ \{\tau_K \leq t \} } 
\end{equation}
for any $ t > 0 $, where
$ \tau_K = \inf \{ t > 0 : X_t = K \} $. 
The PCS (\ref{PCS1}) alone, without the reflection principle, 
suffices to have a static hedging formula for barrier options.
For completeness, we give a brief proof on it. Put 
\begin{equation*}\label{SMRZ1}
\tilde{X}_t = X_t 1_{ \{ \tau_K > t \} }
+ (2K-X_t) 1_{ \{ \tau_K \leq t \} }. 
\end{equation*}
Then by the PCS (\ref{PCS1}), we have
\begin{equation}\label{EQL1}
X_t \overset{\mathop{d}}{=} \tilde{X}_t, 
\end{equation}
for arbitrary $ t > 0 $.
The equivalence in law (\ref{EQL1}) in turn implies  
\begin{equation}\label{EQL2}
P ( X_t \in A : \tau_K \leq t) 
= P ( \tilde{X}_t \in A: \tau_K \leq t ) 
\end{equation}
for any Borel set $ A $. 
Suppose that  $ X_0 > K $ and $ A \subset \{x>K\} $.
Then the right-hand-side is equal to 
$ P ( 2K-X_t \in A ) $ since $ \{ \tau_K \leq t \} $ 
is included in $ \{ 2K-X_t \in A\} $. 
Thus it holds that 
\begin{equation}\label{STH}
P ( X_t \in A : \tau_K > t) = 
P (X_t \in A ) - P (X_t \in 2K -A ) 
\end{equation}
for any Borel set $ A $ with $ A \subset \{x>K\} $
(see \cite{peter}, \cite{iiko}).
In other words,  
\begin{equation}\label{STH2}
\begin{split}
& E [ f( X_t-K )1_{\{X_t>K\}} 1_{\{ \tau_K > t\}}]  \\
& \qquad = 
E[f(X_t-K)1_{\{X_t>K\}} ] - E [f (K-X_t)1_{\{X_t<K\}}]
\end{split}
\end{equation}
for any bounded Borel function $ f $ and $ t>0 $, which can be understood 
as a static hedging formula. 

\subsection{A multi-dimensional generalization of PCS}\label{MDG}
To generalize the argument in the previous subsection, 
we understand $ x \mapsto 2k- x $ as a reflection. 
In $ \mathbf{R}^d $, a reflection is associated 
with a hyperplane.
A hyperplane is 
given by 
$ H_{\alpha,k} := \{ x \in \mathbf{R}^d : \langle \alpha, x \rangle = k  \} $,
where $ k \in \mathbf{R} $ and $ \alpha (\ne 0) \in \mathbf{R}^d $. 
The reflection $ s_{\alpha, k} : \mathbf{R}^d \to \mathbf{R}^d $
with respect to $ H_{\alpha,k} $ is given by
\begin{equation*}\label{RFL1}
s_{\alpha, k} (x) 
=  x - ( \langle x,\alpha \rangle -k ) \frac{2 \alpha}{
|\alpha|^2}.
\end{equation*}
Notice that $ s_{\alpha, k}^2 = \mathrm{Id}_{\mathbf{R}^d} $. 
A natural extension of the previous PCS could be as follows. 
\begin{definition}
Let $ X $ be a diffusion process in $ \mathbf{R}^d $. 
We say $ X $ has the put-call symmetry with respect to the 
hyperplane $ H_{\alpha, k} $ if 
\begin{equation}\label{PCSd}
X_t 1_{ \{ \tau_{\alpha,k} \leq t \} }
\overset{\mathop{d}}{=} s_{\alpha, k}(X_t)
1_{ \{\tau_{\alpha,k}  \leq t \} } 
\end{equation}
for any $ t > 0 $, where
$ \tau_{\alpha,k}  = \inf \{ t > 0 : X_t = H_{\alpha,k} \} $. 
\end{definition}
In a totally similar way as above, we can obtain a
static hedging formula corresponding to (\ref{STH2}). 
In fact, we have, for $ t > 0 $, 
\begin{equation}\label{STH3}
E [ f( X_t )1_{\{ \tau_{\alpha, k} > t\}}]  = 
E[f(X_t)] - E [f (s_{\alpha, k}(X_t))]. 
\end{equation}
for any bounded measurable $ f $ whose support is included in 
a half space
$ \{ x \in \mathbf{R}^d | \pm (\langle \alpha, x \rangle - k) > 0 \} $,
provided that $ X $ has the PCS
with respect to a hyperplane $ H_{\alpha, k} $. 

\subsection{The PCS with respect to a reflection group}\label{PCSwRG}
As the reflection principle is generalized to multiple reflections 
which form a group (see \cite{IT} and references therein), 
so is the put-call symmetry. In this section, 
we discuss the generalization in detail.

It may be natural that we consider the exit time out of an 
intersection of hyperplanes. More precisely, denoting
\begin{equation*}
H_{\alpha, k}^+ := \{ x \in \mathbf{R}^d | \langle \alpha, x \rangle - k > 0 \}
\end{equation*}
and 
\begin{equation*}
\Sigma_{\Phi} := \bigcap_{ (\alpha,k) \in \Phi } H_{\alpha, k}^+, 
\end{equation*}
we set, for a given diffusion $ X $, 
\begin{equation*}
\tau_{\Sigma_\Phi} := 
\inf \{ t>0 : X_t \not\in \Sigma_{\Phi} \}
=\min_{(\alpha,k) \in \Phi} \tau_{\alpha,k} ,
\end{equation*}
and consider the problem of representing 
the expectation of $ f(X_t)1_{\{ \tau_{\Sigma_\Phi} >t\} } $
by those of $ f ( g (X_t) ) $, where $ g $ runs through 
a set $ G $, which will turn out to be 
the group generated by the reflections 
$ s_\lambda $, $ \lambda \in \Phi $.

Looking at the discussion in section \ref{1dim},
we notice that the key was the equation (\ref{EQL2}), 
which is not anymore directly applicable to the multi-reflection case.
However, we have the following generalization.
\begin{lemma}\label{KEY1}
Let $ G $ be the group generated by the reflections $ \{s_{\alpha,k}: 
(\alpha,k) \in \Phi \} $ and $ X $ be a diffusion process in $ \mathbf{R}^d $
satisfying PCS with respect to $ H_{\alpha,k} $ for all 
$ (\alpha, k) \in \Phi $. 
Assume that (i) $ g \Sigma \cap g' \Sigma = \emptyset $ 
whenever $ g \ne g' \in G $, and (ii) there is a group homomorphism 
$ \eta : G \to \mathbf{C} $ (character) such that 
$ \eta (s_{\alpha,k}) = -1 $ for each $ (\alpha, k) \in \Phi$. 
Then, for a Borel subset $ A $ of $ \Sigma $, $ x \in \Sigma $,
and $ t > 0 $, we have
\begin{equation}\label{GRP}
\sum_{g \in G} \eta (g) P_x (X_t \in g A : \tau_{\Sigma_\Phi} \leq t )
= 0. 
\end{equation}
\end{lemma}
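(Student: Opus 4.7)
The plan is to reduce (\ref{GRP}) to the single-reflection PCS by applying the strong Markov property at $\tau := \tau_{\Sigma_\Phi}$ and then using the character $\eta$ in a pairing argument. On $\{\tau \leq t\}$, the point $X_\tau$ lies on $\partial\Sigma \subset \bigcup_{\lambda \in \Phi} H_\lambda$; discarding the (expected) null event that two hyperplanes are hit simultaneously, we partition
\begin{equation*}
\{ \tau \leq t \} = \bigsqcup_{\lambda \in \Phi} E_\lambda, \qquad E_\lambda := \{ \tau = \tau_\lambda \leq t \}.
\end{equation*}

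The key intermediate claim is that, for each $\lambda \in \Phi$ and each Borel $B \subset \mathbf{R}^d$,
\begin{equation*}
P_x(X_t \in B,\, E_\lambda) = P_x(X_t \in s_\lambda B,\, E_\lambda).
\end{equation*}
To establish it, I would apply the strong Markov property at $\tau$ to rewrite the left-hand side as $E_x[1_{E_\lambda}\, P_{X_\tau}(X_{t-\tau} \in B)]$. Since $X_\tau \in H_\lambda$ on $E_\lambda$, the first hitting time $\tau_\lambda$ vanishes under $P_{X_\tau}$, so the PCS (\ref{PCSd}) at $H_\lambda$ yields $P_{X_\tau}(X_{t-\tau} \in B) = P_{X_\tau}(X_{t-\tau} \in s_\lambda B)$, which substituted back gives the displayed identity.

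Granted this, substituting $B = gA$ and reindexing the right-hand sum by $h := s_\lambda g$ (using $\eta(s_\lambda) = -1$) produces
\begin{equation*}
\sum_{g \in G} \eta(g)\, P_x(X_t \in gA,\, E_\lambda) = -\sum_{h \in G} \eta(h)\, P_x(X_t \in hA,\, E_\lambda),
\end{equation*}
so each $\lambda$-slice of the sum in (\ref{GRP}) vanishes; summing over $\lambda \in \Phi$ then completes the proof.

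The main obstacle is the rigorous justification that paths with $\tau_\lambda = \tau_{\lambda'}$ for some $\lambda \ne \lambda'$ form a $P_x$-null set, so that the partition $\{E_\lambda\}_\lambda$ is genuine; for nondegenerate diffusions this should follow from standard potential theory, but it may call for a mild regularity hypothesis on $X$ or a general-position assumption on $\Phi$. A secondary technical point is that the PCS (\ref{PCSd}) must be read as a property of the family $\{P_x\}_x$, rather than of a single law, so that it can be invoked under $P_{X_\tau}$ after the Markov step.
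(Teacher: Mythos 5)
Your proposal is correct and follows essentially the same route as the paper: decompose $\{\tau_{\Sigma_\Phi}\le t\}$ according to which hyperplane realizes the exit time, apply the put-call symmetry on each slice to replace $gA$ by $s_{\alpha,k}gA$, and use $\eta(s_{\alpha,k})=-1$ with the reindexing $g\mapsto s_{\alpha,k}g$ to make each slice cancel. The only difference is that you justify the sliced identity (the paper's equation (\ref{EQL4})) via the strong Markov property at $\tau$ and flag the simultaneous-hitting null set, steps the paper asserts implicitly by appealing directly to the PCS assumption; this is a welcome tightening rather than a different argument.
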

\begin{proof}
We first note that the left-hand-side of (\ref{GRP}) is absolutely 
convergent since the sets $ g A, g \in G $ are disjoint. 
Therefore we can change the order as 
\begin{equation}\label{CTO1}
\begin{split}
& \sum_{g \in G} \eta (g) P_x (X_t \in g A : \tau_{\Sigma_\Phi} \leq t ) \\
& \qquad 
= \sum_{(\alpha,k) \in \Phi } 
\sum_{g \in G} \eta (g) P_x (X_t \in g A : \tau_{\Sigma_\Phi}= \tau_{\alpha,k} 
\leq t ). 
\end{split}
\end{equation}
By the assumption on the put-call symmetry, we have
\begin{equation}\label{EQL4}
\begin{split}
P_x ( X_t \in g A : \tau_{\Sigma_\Phi}=\tau_{\alpha, k} \leq t) 
&= P_x ( s_{\alpha, k} (X_t) \in g A: \tau_{\Sigma_\Phi}=\tau_{\alpha, k} \leq t ) \\
&= P_x (X_t \in s_{\alpha,k} g A: \tau_{\Sigma_\Phi}=\tau_{\alpha, k} \leq t ).
\end{split}
\end{equation}
On the other hand, by the assumption on $ \eta $ we have
\begin{equation*}
\begin{split}
& \sum_{g \in G} \eta (g) 
P_x (X_t \in s_{\alpha,k} g A: \tau_{\Sigma_\Phi}=\tau_{\alpha, k} \leq t ) \\
&\qquad = - \sum_{g \in G}  \eta (s_{\alpha,k} g) 
P_x (X_t \in s_{\alpha,k} g A: \tau_{\Sigma_\Phi}=\tau_{\alpha, k} \leq t ),
\end{split}
\end{equation*}
which is equal to 
\begin{equation*}
- \sum_{g \in G} \eta (g) 
P_x (X_t \in g A: \tau_{\Sigma_\Phi}=\tau_{\alpha, k} \leq t ),
\end{equation*}
thanks to the group structure. This observation together with 
the equation (\ref{EQL4}) shows that 
\begin{equation*}
\sum_{g \in G} \eta (g) 
P_x (X_t \in g A: \tau_{\Sigma_\Phi}=\tau_{\alpha, k} \leq t ) = 0, 
\end{equation*}
which proves the assertion with (\ref{CTO1}). 
\end{proof}

Thanks to the lemma, we have a generalization of 
the static hedging formula.
\begin{thm}
We keep the notations and the assumptions 
of the Lemma \ref{KEY1}. 
For a bounded Borel function $ f $ with its support in $ \Sigma_\Phi $,
we have
\begin{equation}\label{STH4}
E [ f( X_t )1_{\{ \tau_{\Sigma_\Phi} > t\}}]  = 
\sum_{g \in G} \eta (g) E [f (g^{-1}(X_t))].
\end{equation}
\end{thm}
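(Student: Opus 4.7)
The plan is to split each expectation on the right-hand side of (\ref{STH4}) via the event $\{\tau_{\Sigma_\Phi} > t\}$ and its complement, so that the ``not yet exited'' contributions collapse to the left-hand side while the ``already exited'' contributions cancel by an application of Lemma \ref{KEY1}.

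Concretely, I would begin by writing, for each $g \in G$,
\begin{equation*}
E[f(g^{-1}(X_t))] = E[f(g^{-1}(X_t))1_{\{\tau_{\Sigma_\Phi} > t\}}] + E[f(g^{-1}(X_t))1_{\{\tau_{\Sigma_\Phi} \leq t\}}].
\end{equation*}
On $\{\tau_{\Sigma_\Phi} > t\}$ one has $X_t \in \Sigma_\Phi$, hence $g^{-1}(X_t) \in g^{-1}\Sigma_\Phi$. Assumption (i) of Lemma \ref{KEY1} gives $g^{-1}\Sigma_\Phi \cap \Sigma_\Phi = \emptyset$ whenever $g \ne e$, so since the support of $f$ lies in $\Sigma_\Phi$ the first piece vanishes for all $g \ne e$. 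For $g = e$ we have $\eta(e) = 1$ and the first piece equals the left-hand side of (\ref{STH4}).

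It then suffices to show that
\begin{equation*}
\sum_{g \in G}\eta(g)\,E[f(g^{-1}(X_t))1_{\{\tau_{\Sigma_\Phi} \leq t\}}] = 0.
\end{equation*}
Since each $g \in G$ is a bijection on $\mathbf{R}^d$, $\{g^{-1}(X_t) \in A\} = \{X_t \in gA\}$, so for indicator functions $f = 1_A$ with $A \subset \Sigma_\Phi$ Borel this identity is exactly equation (\ref{GRP}) of Lemma \ref{KEY1}. A standard monotone-class/linearity extension then upgrades the identity to all bounded Borel $f$ supported in $\Sigma_\Phi$, completing the argument.

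The main obstacle, if any, lies in justifying the interchange of summation and integration, and the approximation step, when $G$ is infinite. For the ``not yet exited'' part this is vacuous since at most the term $g=e$ is nonzero. For the ``already exited'' part, the disjointness of the translates $gA$ inherited from assumption (i) yields $\sum_{g \in G} P_x(X_t \in gA, \tau_{\Sigma_\Phi} \leq t) \leq 1$, which secures absolute convergence uniformly in the approximating simple functions and legitimizes each exchange of limits and sums.
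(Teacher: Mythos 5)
Your proposal is correct and follows essentially the same route as the paper: reduce to indicators $f=1_A$, split each term over $\{\tau_{\Sigma_\Phi}>t\}$ and $\{\tau_{\Sigma_\Phi}\leq t\}$, kill the non-identity terms on $\{\tau_{\Sigma_\Phi}>t\}$ by the disjointness of the translates $g\Sigma_\Phi$, and cancel the exited part via equation (\ref{GRP}) of Lemma \ref{KEY1}. Your explicit attention to the monotone-class extension and the absolute convergence of the sum is a small but welcome addition to the paper's terser argument (which also, in passing, cites assumption (ii) where the disjointness assumption (i) is what is really used, as you correctly do).
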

\begin{proof}
It suffices to show that 
\begin{equation*}
P_x ( X_t \in A : \tau_{\Sigma_\Phi} > t )
= \sum_{g \in G} \eta (g) P_x  (X_t \in g A ),
\end{equation*}
but this is equivalent to (\ref{GRP}) since
\begin{equation*}
P_x  (X_t \in g A : \tau_{\Sigma_\Phi} > t) = 0 
\end{equation*}
unless $ g \ne 1 $, by the assumption (ii) in Lemma \ref{KEY1}. 
\end{proof}

\begin{remark}
The one reflection case in section \ref{MDG} automatically 
satisfies the assumptions in Lemma \ref{KEY1}, and 
apparently (\ref{STH4}) includes (\ref{STH3}). 
\end{remark}

\begin{remark}
The case with $ \Phi = \{ (\alpha, k), (-\alpha, -k-1) \} $
corresponds to the double boundary reflections, where, 
by choosing conventionally $ |\alpha|=1 $
and $ \eta $ to be the (mod $ 2 $)``length" of the reflections (see e.g. \cite{HUMP}), 
\begin{equation*}
P_x ( X_t \in A ; \tau_{\Sigma_\Phi} > t )
= \sum_{k \in \mathbf{Z}} \{ P_x (X_t \in A + 2k \alpha ) - 
P_x (X_t \in -A + (2k+1) \alpha )\}. 
\end{equation*}   
\end{remark}

\begin{remark}
More generally, in the cases of multi-reflections, there do exist 
the situations where the assumptions are fulfilled. 
There are two important classes. One is that of 
finite reflection groups  
and the other, of affine reflection groups. 
The domain $ \Sigma $ is a cone in the former class and 
a simplex in the latter. 
In fact, if $ \{ \alpha \} $ forms a so-called 
{\em simple} system (or fundamental system) of a {\em root} system,
and (i) if $ k $ is fixed, then the group becomes finite, 
and (ii) if additionally $ \alpha $ is taken from a root system 
properly and $ k' $ is taken to be $ -k-1 $, then the group 
becomes (isomorphic) to the semi-direct product of the finite reflection
group and the translation group $ \mathbf{Z} $, which is called an
{\em affine reflection group}. In both cases we can take $ \eta $ 
to be its determinant 
when the finite reflection group is embedded 
into the orthogonal group and therefore into the general linear group. 
For details, see e.g. \cite{HUMP}.
\end{remark}

\section{The symmetrization}\label{SYMRZ}
Let
$ X $ be a solution to
\begin{equation}\label{SDE}
dX_t = \sigma (X_t) \,dW_t + \mu(X_t) \,dt,
\end{equation}
where 
$ \sigma : \mathbf{R}^d \to \mathbf{R}^d \times \mathbf{R}^d $ 
and $ \mu: \mathbf{R}^d \to \mathbf{R}^d $ are 
piecewise continuous functions with at most linear growth, and 
$ W $ denotes a $ d $-dimensional standard Wiener process. 
We further impose some ellipticity condtions on $ \sigma \sigma^* $
to ensure that 
a unique (weak) solution to (\ref{SDE}) exists,
and that Euler-Maruyama approximation of the solution of (\ref{SDE}) works;

\subsection{One dimensional case revisited}\label{1dim2}
In the case of $ d =1$ (with $ \alpha=1 $), 
Carr and Lee \cite{peter} showed that
\begin{equation}\label{C1PCS}
 \sigma (x) = \pm \sigma (2k-x), \quad \mu (x) = - \mu (2k-x)
\end{equation}
is a sufficient condition under which 
the solution $ X $ of (\ref{SDE}) has the PCS. 
As is pointed our in Introduction, even though this is much weaker than 
the reflection principle which can be now rephrased as 
``PCS at any k", yet it is not practical to assume 
a price process to satisfy PCS. 

In \cite{iiko}, however, the following observation is made;
\begin{thm}[\cite{iiko}]\label{iiko}
Let $ X $ be a solution to (\ref{SDE}) without (\ref{C1PCS})
and $ \tilde{X} $ be one to
\begin{equation*}
d\tilde{X}_t = \tilde{\sigma} (\tilde{X}_t) \,dW_t 
+ \tilde{\mu}(\tilde{X}_t) \,dt,
\end{equation*}
where 
\begin{equation*}
\tilde{\sigma} (x) = \sigma (x) 1_{ \{x > k\} } 
\pm \sigma (2k-x) 1_{ \{x \leq k\} },
\end{equation*}
and 
\begin{equation*}
\tilde{\mu} (x) = \mu (x) 1_{ \{x > k\} } 
- \mu (2k-x) 1_{ \{x \leq k\} }.
\end{equation*}
We assume $ X_0 = \tilde{X}_0 > K $. Then we have
\begin{equation}\label{STH5}
\begin{split}
& E [ f( X_t-K )1_{\{X_t>K\}} 1_{\{ \tau_K > t\}}]  \\
& \qquad = 
E[f(\tilde{X}_t-K)1_{\{\tilde{X}_t>K\}} ] 
- E [f (K-\tilde{X}_t)1_{\{\tilde{X}_t<K\}}]
\end{split}
\end{equation}
for any bounded Borel function $ f $ and $ t>0 $. 
\end{thm}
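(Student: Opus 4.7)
The plan is to deduce (\ref{STH5}) by applying the one-dimensional PCS identity (\ref{STH2}) to $\tilde X$, after verifying two points: first, that $\tilde X$ itself satisfies the Carr--Lee sufficient condition (\ref{C1PCS}) for PCS at $K$; and second, that $X$ and $\tilde X$ can be coupled so as to coincide pathwise up to the first passage at $K$. The first point is a direct algebraic check. If $x>K$, then $\tilde\sigma(x) = \sigma(x)$ while $2K-x<K$, so $\tilde\sigma(2K-x) = \pm\sigma(x)$; hence $\tilde\sigma(x) = \pm\tilde\sigma(2K-x)$, and the case $x<K$ is symmetric. The same computation, with the extra minus sign built into the definition of $\tilde\mu$ on $\{x\le K\}$, delivers $\tilde\mu(x) = -\tilde\mu(2K-x)$. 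Hence (\ref{C1PCS}) holds for $\tilde X$, and the Carr--Lee result combined with (\ref{STH2}) yields
\begin{equation*}
E[f(\tilde X_t-K)1_{\{\tilde X_t>K\}}1_{\{\tau^{\tilde X}_K>t\}}] = E[f(\tilde X_t-K)1_{\{\tilde X_t>K\}}] - E[f(K-\tilde X_t)1_{\{\tilde X_t<K\}}],
\end{equation*}
where $\tau^{\tilde X}_K$ denotes the first hitting time of $K$ by $\tilde X$.

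For the second point, I would observe that on $(K,\infty)$ the coefficients $\tilde\sigma,\tilde\mu$ coincide identically with $\sigma,\mu$. Since $X_0=\tilde X_0>K$, the two SDEs killed at their respective first passages at $K$ live on $(K,\infty)$, share the same coefficients there and the same starting point, so weak uniqueness of (\ref{SDE}) forces the laws of the killed processes to agree on path space. Consequently, for every bounded Borel $f$ and every $t>0$,
\begin{equation*}
E[f(X_t-K)1_{\{X_t>K\}}1_{\{\tau_K>t\}}] = E[f(\tilde X_t-K)1_{\{\tilde X_t>K\}}1_{\{\tau^{\tilde X}_K>t\}}],
\end{equation*}
because on each of the events $\{\tau_K>t\}$ and $\{\tau^{\tilde X}_K>t\}$ the entire trajectory up to time $t$ lies in $(K,\infty)$, where the two coefficient pairs agree. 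Chaining this with the previous display gives (\ref{STH5}).

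The main technical hurdle is precisely the distributional coupling in the second step: $\tilde\sigma$ and $\tilde\mu$ have a discontinuity at $K$, so uniqueness cannot be invoked globally. I would navigate this by restricting attention to $(K,\infty)$, where $\sigma$ and $\tilde\sigma$ (resp.\ $\mu$ and $\tilde\mu$) are literally the same piecewise continuous functions and the standing ellipticity and at most linear growth assumptions of Section \ref{SYMRZ} apply unchanged; this is enough for uniqueness of the sub-Markovian semigroup associated with (\ref{SDE}) killed upon exit from $(K,\infty)$, which is all that the argument requires. Once this identification of laws on $\{\tau_K>t\}$ is secured, the theorem becomes a one-line consequence of PCS for $\tilde X$.
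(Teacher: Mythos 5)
Your proposal is correct and follows essentially the same route as the paper: the paper states this theorem as a quotation from \cite{iiko} and recovers it as the special case $\Phi=\{(1,K)\}$, $U_x\equiv I$ of Theorem \ref{main}, whose proof likewise consists of checking that the symmetrized coefficients $\tilde{\sigma},\tilde{\mu}$ satisfy the symmetry condition (the one-dimensional instance being exactly (\ref{C1PCS})) and then invoking the static hedging formula (\ref{STH2})/(\ref{STH4}) for $\tilde{X}$. Your second step, identifying the law of $X$ killed at $\tau_K$ with that of $\tilde{X}$ killed at its own hitting time via weak uniqueness on $(K,\infty)$, is the same ingredient the paper uses but leaves implicit, and your explicit treatment of it (restricting to the half-line where the coefficients literally coincide, so the discontinuity of $\tilde{\sigma},\tilde{\mu}$ at $K$ is harmless) is a welcome precision rather than a deviation.
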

They claimed that the formula (\ref{STH5}) gives a new insight to 
financial engineering of barrier options; it says 
that the price of barrier option 
is expressed by those of plain options. Numerical analysis for 
the former is difficult, while the latter is much easier.  
In fact, the numerical experiments they performed (part of which is
appeared in (\cite{iiko})) show that their scheme is quiet effective. 

\begin{remark}
The PCS method can be seen as a diffusion equation counterpart 
(or simply a generalization) of 
so-called {\em method of image charges} in electrostatics. 
The authors thank Prof. Sergey Nadtochiy for 
suggesting to us this ``synchronicity". 
\end{remark}
 
\subsection{Symmetrization with respect to a reflection group}\label{FGT}
They also applied their scheme to stochastic volatility models like 
Heston or SABR type, where above Theorem \ref{iiko} cannot be applied directly
since they are basically two dimensional models. 
With a view to endowing a certificate, we give a general result 
on multi-dimensional multi-reflection cases in this section. 
The certificate will be provided in the next section as a corollary
to the result. 

We start with a lemma. 
\begin{lemma}
Let $ A $ be an affine transformation in $ \mathbf{R}^d $ such that 
$ A x = A_0 x+ a $ for $ x \in \mathbf{R}^d $ with 
$ A_0 \in GL (d,\mathbf{R}) $
and $ a \in \mathbf{R}^d $. Suppose that
\begin{equation}\label{SMTRY}
\sigma (A x ) = A_0 \sigma (x) U_x, \quad \mu (Ax) = A_0 \mu (x)
\end{equation}
for $ x \in \mathbf{R}^d $, with some piecewise continuous 
$ x \mapsto U_x \in O (d) $. 
Then $ A X_t $ starting from $ A x $ 
is identically distributed as $ X_t $ starting from $ x \in \mathbf{R}^d $
as a stochastic process provided that $ Ax = x $,
where $ X $ is the unique weak solution to (\ref{SDE}). 
\end{lemma}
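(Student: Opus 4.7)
The plan is to set $Y_t := A X_t$ and show that $Y$ solves the same SDE (\ref{SDE}) driven by a new Brownian motion, with the same initial condition. Weak uniqueness then gives the equality in law.

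First I would use the affine form of $A$ to compute
\begin{equation*}
dY_t = A_0\, dX_t = A_0\sigma(X_t)\,dW_t + A_0\mu(X_t)\,dt.
\end{equation*}
Invoking the drift identity $\mu(Ax)=A_0\mu(x)$ rewrites the drift term as $\mu(Y_t)\,dt$, and the diffusion identity $\sigma(Ax)=A_0\sigma(x)U_x$, combined with $U_x\in O(d)$, gives $A_0\sigma(X_t)=\sigma(Y_t)U_{X_t}^{-1}=\sigma(Y_t)U_{X_t}^{\top}$. Thus
\begin{equation*}
dY_t = \sigma(Y_t)\,U_{X_t}^{\top}\,dW_t + \mu(Y_t)\,dt.
\end{equation*}

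Next I would define $\widetilde{W}_t := \int_0^t U_{X_s}^{\top}\,dW_s$, which is a continuous local martingale in the natural filtration. Since $U_{X_s}\in O(d)$, its matrix-valued quadratic variation is
\begin{equation*}
\langle \widetilde{W}\rangle_t = \int_0^t U_{X_s}^{\top}U_{X_s}\,ds = t\,I_d,
\end{equation*}
so by L\'evy's characterization $\widetilde{W}$ is a standard $d$-dimensional Brownian motion. Piecewise continuity of $x\mapsto U_x$ together with continuity of paths of $X$ makes $U_{X_s}^{\top}$ a progressively measurable, bounded integrand, so the stochastic integral is well defined; this is the only mildly delicate point, but it is routine.

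Finally, the fixed-point assumption $Ax=x$ gives $Y_0=Ax=x$, so $Y$ is a weak solution of (\ref{SDE}) starting at $x$, driven by the Brownian motion $\widetilde{W}$. By the hypothesized weak uniqueness for (\ref{SDE}), $Y=AX$ has the same law as $X$ when both start at $x$, which is the claim. I expect no serious obstacle beyond carefully invoking L\'evy's characterization for the rotated noise and checking that $U_{X_\cdot}^{\top}$ is an admissible stochastic integrand under the stated piecewise continuity.
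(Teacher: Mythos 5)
Your proposal is correct and follows essentially the same route as the paper: set $Y=AX$, use the symmetry hypotheses to show $Y$ solves the same SDE with a rotated driving noise, and conclude by weak uniqueness. Your invocation of L\'evy's characterization for $\widetilde{W}_t=\int_0^t U_{X_s}^{\top}\,dW_s$ is in fact a more careful justification than the paper's terse remark that ``$U_x^{-1}W_t$ is another Wiener process.''
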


\begin{proof}
Put $ Y = A X $. Then, 
\begin{equation*}
d Y_t = A_0 d X_t = A_0 \sigma (X_t) dW_t + A_0 \mu (X_t) dt,
\end{equation*}
which equals to 
\begin{equation*}
\sigma (A X_t) \, U_x^{-1} d{W}_t + \mu (A X_t) \, dt,
\end{equation*}
by the assumptions (\ref{SMTRY}), where we note that $ U_x^{-1} {W}_t $ 
is another Wiener process. Namely, $ Y $ is a weak solution to
\begin{equation*}
d Y_t = \sigma (Y_t) \, dW_t + \mu (Y_t) \, dt.
\end{equation*}
By the uniqueness of the solution, we have the assertion. 
\end{proof}
As in the one dimensional case, we {\em symmetrize} 
both $ \sigma $ and $ \mu $ to get another diffusion 
with the PCS for which an extended static hedging relation 
still holds. 

\begin{thm}\label{main}
Suppose that we are given a family of hyperplanes indexed by $ \Phi $
which satisfies the assumptions of Lemma \ref{KEY1}.
Put 
\begin{equation}\label{SMTRZ}
\tilde{\sigma} (x) = \sum_{g \in G} T_g \sigma (g^{-1} x ) 
1_{ \{x \in g \Sigma_\Phi\}} U_x , 
\quad
\tilde{\mu} (x) = \sum_{g \in G} T_g \mu (g^{-1} x )
1_{ \{x \in g \Sigma_\Phi\}},
\end{equation}
where $ T_g  $ is an orthogonal matrix corresponding 
to the reflection part of $ g $, and $ x \mapsto U_x \in O (d) $ 
is a piecewise continuous map. 
Let $ \tilde{X} $ be a solution to 
\begin{equation*}
d \tilde{X}_t = \tilde{\sigma} (\tilde{X}_t)\,dW_t 
+ \tilde{\mu} (\tilde{X}_t)\,dt. 
\end{equation*}
Then for a bounded Borel function $ 
f $ with its support in $ \Sigma_\Phi $,
we have
\begin{equation}\label{STH6}
E [ f( X_t )1_{\{ \tau_{\Sigma_\Phi} > t\}}]  = 
\sum_{g \in G} \eta (g) E [f (g^{-1}(\tilde{X}_t))].
\end{equation}
Here we used the notations set in section \ref{PCSwRG}. 
\end{thm}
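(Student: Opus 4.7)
My strategy is to deduce Theorem \ref{main} from the static hedging formula (\ref{STH4}) by using $\tilde X$ as a process to which (\ref{STH4}) directly applies, and then showing that $X$ and $\tilde X$ produce the same left-hand side. Concretely, I need two facts: (a) $\tilde X$ enjoys the PCS at every hyperplane in $\Phi$, so that (\ref{STH4}) is legitimate for $\tilde X$; and (b) the laws of $X_{\cdot\wedge\tau_{\Sigma_\Phi}}$ and $\tilde X_{\cdot\wedge\tau_{\Sigma_\Phi}}$ coincide, so that $E[f(X_t)1_{\{\tau_{\Sigma_\Phi}>t\}}]=E[f(\tilde X_t)1_{\{\tau_{\Sigma_\Phi}>t\}}]$ whenever $\mathrm{supp}\,f\subset \Sigma_\Phi$.

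For (a), fix $(\alpha,k)\in\Phi$, write $s=s_{\alpha,k}$, and substitute $g=sh$ in the defining sums (\ref{SMTRZ}). Using the multiplicativity $T_{sh}=T_sT_h$ of the orthogonal parts, the disjointness identity $sy\in sh\,\Sigma_\Phi \iff y\in h\,\Sigma_\Phi$, and $s^2=\mathrm{Id}$, the factor $T_s$ pulls out cleanly and one obtains
\begin{equation*}
\tilde\sigma(sy)=T_s\,\tilde\sigma(y)\,V_y,\qquad \tilde\mu(sy)=T_s\,\tilde\mu(y),
\end{equation*}
with $V_y:=U_y^{-1}U_{sy}\in O(d)$. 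This is precisely the hypothesis (\ref{SMTRY}) for the affine map $A=s$, whose linear part is $T_s$ and which fixes $H_{\alpha,k}$ pointwise. The lemma preceding Theorem \ref{main} then gives that $s\tilde X$ and $\tilde X$, when started from any common point of $H_{\alpha,k}$, are identically distributed as processes, and the strong Markov property applied at $\tau_{\alpha,k}^{\tilde X}$ upgrades this to the PCS (\ref{PCSd}) for $\tilde X$ at $H_{\alpha,k}$.

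Granted (a), (\ref{STH4}) applied to $\tilde X$ yields the right-hand side of (\ref{STH6}) immediately, so the argument reduces to (b). For $x\in\Sigma_\Phi$, disjointness of the regions $g\Sigma_\Phi$ collapses (\ref{SMTRZ}) to its $g=e$ term, giving $\tilde\sigma(x)=\sigma(x)U_x$ and $\tilde\mu(x)=\mu(x)$. By Lévy's characterization, $\overline W_t:=\int_0^t U_{\tilde X_s}^{*}\,dW_s$ is a standard Wiener process, so on $[0,\tau_{\Sigma_\Phi}^{\tilde X}]$ one has $d\tilde X_t=\sigma(\tilde X_t)\,d\overline W_t+\mu(\tilde X_t)\,dt$; the weak well-posedness of (\ref{SDE}) posited just after that equation then identifies the laws of $(X_{\cdot\wedge\tau_{\Sigma_\Phi}^X})$ and $(\tilde X_{\cdot\wedge\tau_{\Sigma_\Phi}^{\tilde X}})$ started from the common point $X_0=\tilde X_0\in\Sigma_\Phi$, which is (b).

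The main obstacle I expect is the clean bookkeeping in (a): one must confirm that the residual orthogonal factor $V_y=U_y^{-1}U_{sy}$ is precisely the right object to hand to the affine-transformation lemma, which crucially exploits the piecewise-continuous latitude allowed in the choice of $U$ in (\ref{SMTRY}). The coupling step (b) is more routine but genuinely leans on the stated weak well-posedness of (\ref{SDE}) together with Lévy's characterization to dispose of the orthogonal nuisance factor $U_x$.
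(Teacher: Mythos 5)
Your proposal is correct and follows essentially the same route as the paper: the paper's proof consists precisely of verifying that $\tilde{\sigma},\tilde{\mu}$ satisfy the invariance condition (\ref{SMTRY}) for every $h\in G$ (using $T_fT_g=T_{fg}$ and the disjointness of the cells $g\Sigma_\Phi$), leaving implicit the remaining steps that you spell out — PCS of $\tilde X$ via the affine-transformation lemma plus the strong Markov property, agreement in law of $X$ and $\tilde X$ up to $\tau_{\Sigma_\Phi}$ since $\tilde\sigma=\sigma U$, $\tilde\mu=\mu$ on $\Sigma_\Phi$, and the application of (\ref{STH4}) to $\tilde X$. Your bookkeeping of the residual orthogonal factor $V_y=U_y^{-1}U_{sy}$ is in fact slightly more careful than the paper's display, which writes the factor as $U_{hx}$, but this makes no difference since (\ref{SMTRY}) only requires some piecewise continuous orthogonal-valued map.
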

\begin{proof}
It suffices to show that $ \tilde{\sigma} $ and $ \tilde{\mu} $
satisfy (\ref{SMTRY}). 
For $ h \in G $, we can find $ h_0 \in \mathbf{R}^d $ 
and $ T_h \in O (d) $ such that $ h x = T_h x + h_0 $. 
We then have 
\begin{equation*}
\begin{split}
\tilde{\sigma} (hx) 
&= \sum_{ g \in G } T_h T_h^{-1}T_g  \sigma (g^{-1} h x) 
1_{ \{ x \in h^{-1} g \Sigma_\Phi\}} U_{h x}\\
&= T_h \sum_{ g \in G } T_{h^{-1} g } 
\sigma ( (h^{-1} g)^{-1} x)
1_{ \{ x \in h^{-1} g \Sigma_\Phi\}} U_{h x} \\
&= T_h \tilde{\sigma} (x) U_{h x}.
\end{split}
\end{equation*} 
Here we have used $ T_f T_g = T_{f g} $ for $ f, g \in G $, 
which can be verified by
\begin{equation*}
f g (x) = f (T_g x + g_0) = T_f T_g x + T_f g_0 + f_0,
\end{equation*}
where $ f x = T_f x + f_0 $ and $ g x = T_g x + g_0 $. 

Similarly we have $ \tilde{\mu} (hx) = T_h \mu (x) $. 
\end{proof}

\begin{remark}\label{PRCLR}
We can use the numerical scheme based on the equation (\ref{STH6})  
for general $ \Phi $ since, by a linear transformation, 
a generic $ \Phi $ (and given $ X $) can be 
transformed into the system satisfying the assumptions in Lemma
\ref{KEY1}. 
\end{remark}

\section{Applications}\label{APP}

\subsection{Stochastic volatility models}\label{SVMS}
As we have stated in the previous section, we give a certificate
to the use of the PCS symmetrization method to 
such stochastic volatility models as Heston's and SABR type. 

A generic stochastic volatility model is given as follows:
\begin{equation}\label{GSV1}
\begin{split}
dX_t &= \sigma_{11} (X_t,V_t) dW_t  + \mu_1 (X_t,V_t) \,dt  \\
dV_t &= \sigma_{21} (V_t) dW_t + \sigma_{22} (V_t) dB_t + \mu_2 (V_t) \, dt,
\end{split}
\end{equation}
where $ W $ and $ B $ are mutually independent ($1$-dim) Wiener processes, 
$ \mu (x,v) = (\mu_1 (x,v), \mu_2 (v) ) $ 
is a continuous function 
on $ \mathbf{R}^2 $, 
and 
\begin{equation*}
\sigma (x, v) = \begin{pmatrix}
\sigma_{11} (x,v) & 0 \\
\sigma_{21} (v) & \sigma_{22} (v)  \\
\end{pmatrix}
\end{equation*}
is a continuous function on $ \mathbf{R}^2 $.
In most cases, $ \sigma_{11} (x,v)  = x \nu (v) $ for some $ \nu $ 
and $ \mu_1 (x,v) = r x $ from financial reasoning, and 
$ x $- and $ v $- axes are set to be natural boundary. 
In our scheme, however, these features are irrelevant but
that $ V $ do not depend on $ S $ is important. 

\begin{corollary}\label{SVPCS}
Let $ K > 0 $ and put 
\begin{equation*}
\tilde{\sigma}_{11} (x,v) = 
\begin{cases}
\sigma_{11} (x,v) & x \geq K \\
- \sigma_{11} (2K-x,v) & x < K
\end{cases}.
\end{equation*}
\begin{equation*}
\tilde{\mu}_{1} (x,v) = 
\begin{cases}
\mu_{1} (x,v) & x \geq K \\
- \mu_{1} (2K-x,v) & x < K 
\end{cases},
\end{equation*}
and let $ \tilde{X} $ be the unique (weak) solution to 
\begin{equation*}
d\tilde{X}_t = \tilde{\sigma}_{11} (\tilde{X}_t,V_t) dW_t  
+ \tilde{\mu}_1 (\tilde{X}_t,V_t) \,dt,
\end{equation*}
where $ V $ is the solution to (\ref{GSV1}). 
Then, it holds for any bounded Borel function $ f $ and $ t>0 $ that
\begin{equation*}\label{STH7}
\begin{split}
& E [ f( X_t-K )1_{\{X_t>K\}} 1_{\{ \tau_K > t\}}]  \\
& \qquad = 
E[f(\tilde{X}_t-K)1_{\{\tilde{X}_t > K\}} ] 
- E [f (K-\tilde{X}_t)1_{\{ \tilde{X}_t < K \}}],
\end{split}
\end{equation*}
where $ X $ is the solution to (\ref{GSV1}) with $ X_0 > K $ 
and $ \tau_K $ is the first hitting time of $ X $ to $ K $. 
\end{corollary}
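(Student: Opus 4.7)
The plan is to derive this as a direct application of Theorem \ref{main} after recasting the 2D stochastic volatility model in the reflection framework of section \ref{PCSwRG}. I take $d = 2$ with coordinates $(x, v)$ and a single hyperplane $\Phi = \{((1, 0), K)\}$, so that $\Sigma_\Phi = \{(x, v) : x > K\}$, the reflection is $s(x, v) = (2K - x, v)$, and $G = \{e, s\}$. The hypotheses of Lemma \ref{KEY1} are immediate: the two half-spaces $\Sigma_\Phi$ and $s\Sigma_\Phi = \{x < K\} \times \mathbf{R}$ are disjoint, and the character $\eta$ is defined by $\eta(e) = 1$, $\eta(s) = -1$. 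Moreover, the first hitting time $\tau_K$ of $X$ in the one-dimensional sense agrees with $\tau_{\Sigma_\Phi}$ of the two-dimensional process $(X, V)$.

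Next I would identify the symmetrized coefficients produced by Theorem \ref{main} with those written in the corollary. The reflection matrix is $T_s = \mathrm{diag}(-1, 1)$, and I take $U_x \equiv I_2$. A direct computation via formula (\ref{SMTRZ}) yields, for $(x, v) \in s\Sigma_\Phi$,
\begin{equation*}
T_s \sigma(s^{-1}(x, v)) = \begin{pmatrix} -\sigma_{11}(2K-x, v) & 0 \\ \sigma_{21}(v) & \sigma_{22}(v) \end{pmatrix}, \quad T_s \mu(s^{-1}(x, v)) = \begin{pmatrix} -\mu_1(2K-x, v) \\ \mu_2(v) \end{pmatrix}.
\end{equation*}
The decisive observation is that $\sigma_{21}, \sigma_{22}$ and $\mu_2$ depend only on $v$ and sit in the second row, which $T_s$ leaves fixed; hence they are unchanged under the symmetrization, while only the $(1,1)$-entry and $\mu_1$ get reflected. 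Comparing with the definitions of $\tilde{\sigma}_{11}$ and $\tilde{\mu}_1$ in the corollary shows that the two-dimensional process $(\tilde{X}, V)$ is precisely the diffusion furnished by Theorem \ref{main}.

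To finish, I would apply Theorem \ref{main} to the test function $\tilde{f}(x, v) := f(x - K) \mathbf{1}_{\{x > K\}}$, which is bounded, Borel, and supported in $\Sigma_\Phi$. Since $G = \{e, s\}$ with $\eta(s) = -1$, formula (\ref{STH6}) reads
\begin{equation*}
E[\tilde{f}(X_t, V_t) \mathbf{1}_{\{\tau_{\Sigma_\Phi} > t\}}] = E[\tilde{f}(\tilde{X}_t, V_t)] - E[\tilde{f}(s(\tilde{X}_t, V_t))],
\end{equation*}
and evaluating $\tilde{f}(s(\tilde{X}_t, V_t)) = f(K - \tilde{X}_t) \mathbf{1}_{\{\tilde{X}_t < K\}}$ together with $\tau_K = \tau_{\Sigma_\Phi}$ yields the claim.

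The only nontrivial point to verify is the compatibility of the block-triangular structure of $\sigma$ with the symmetrization; this is precisely the paper's own remark that what matters is that $V$ \emph{does not} depend on $X$. If the $V$-dynamics depended on $x$, the second row of $T_s \sigma(2K-x, v)$ would no longer reproduce the original second row, the construction in (\ref{SMTRZ}) would force one to modify the $V$-equation as well, and the one-sided symmetrization stated in the corollary would fail to match Theorem \ref{main}. Beyond this sanity check the argument is a mechanical unpacking of the general result for the trivial reflection group of order two.
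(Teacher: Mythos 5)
Your argument is exactly the paper's proof, which simply says ``apply Theorem \ref{main} to $\Phi=\{((1,0),K)\}$ with $U_x\equiv I$''; you have merely spelled out the details (the order-two group, the check of Lemma \ref{KEY1}, the computation of the symmetrized coefficients and the role of the block structure of $\sigma$) that the paper leaves implicit. The proposal is correct and takes essentially the same route.
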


\begin{proof}
Apply Theorem \ref{main} to $ \Phi = \{ ( (1,0), K ) \} $, 
with $ U_x \equiv I $.
\end{proof}

We can also obtain a static hedging formula for a 
double barrier option. Let $ K' $ be a positive constant.
We consider static hedging of an option knocked out 
if $ X $ of (\ref{GSV1}) hit either the boundary $ x=K $ or
$ x = K + K' $.      
\begin{corollary}
Set 
\begin{equation*}
\begin{split}
& \hat{\sigma}_{11} (x,v) \\
& = 
\begin{cases}
\sigma_{11} (x-2nK',v) &  K + 2n K' \leq x < k + 
(2n+1) K',  \\
- \sigma_{11} (2K-x+ 2nK',v) & K + (2n-1) K' \leq x < 
K + 2n K',
\end{cases} \\
& \hspace{8cm} n \in \mathbf{Z},
\end{split}
\end{equation*}
\begin{equation*}
\begin{split}
& \hat{\mu}_{1} (x,v) \\
& = 
\begin{cases}
\mu_{1} (x-2nK',v) &  K + 2n K' \leq x < k + 
(2n+1) K',  \\
- \mu_{1} (2K-x+ 2nK',v) & K + (2n-1) K' \leq x < 
K + 2n K',
\end{cases} \\
& \hspace{8cm} n \in \mathbf{Z},
\end{split}
\end{equation*}
and let $ \hat{X} $ be the unique (weak) solution to 
\begin{equation*}
d\hat{X}_t = \hat{\sigma}_{11} (\hat{X}_t,V_t) dW_t  
+ \hat{\mu}_1 (\hat{X}_t,V_t) \,dt,
\end{equation*}
where $ V $ is the solution to (\ref{GSV1}). 
Then, it holds for any bounded Borel function $ f $ and $ t>0 $ that
\begin{equation*}\label{STH7}
\begin{split}
& E [ f( X_t-K )1_{\{X_t \in (K,K+K') \}} 1_{\{ \tau_{(K,K+K')} > t\}}]  \\
& \quad = 
\sum_{n \in \mathbf{Z} }
E[f(\hat{X}_t-K-2nK')1_{\{\hat{X}_t \in (K,K+K') \}} ] \\
& \qquad - \sum_{n \in \mathbf{Z} }
E [f (K-\hat{X}_t+(2n-1)K')1_{\{ \hat{X}_t \in (K, K+K') \}}],
\end{split}
\end{equation*}
where $ X $ is the solution to (\ref{GSV1}) with $ X_0 \in (K,K+K') $ 
and $ \tau_{(K,K+K')} $ is the first exit time of $ X $ out of $ (K,K+K') $. 
\end{corollary}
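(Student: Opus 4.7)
The plan is to apply Theorem \ref{main}, exactly in the stochastic-volatility manner used in Corollary \ref{SVPCS}, but now with the pair of parallel hyperplanes that bound the strip $(K, K+K')$. Concretely, I would take $\Phi = \{((1,0), K),\ ((-1,0), -(K+K'))\}$ in $\mathbf{R}^2$, so that the two associated reflections $s_1:(x,v) \mapsto (2K-x, v)$ and $s_2:(x,v) \mapsto (2(K+K')-x, v)$ fix the volatility coordinate and $\Sigma_\Phi = (K, K+K') \times \mathbf{R}$.

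The first preparatory step is to analyze $G = \langle s_1, s_2 \rangle$. It is the infinite dihedral group, and its elements split into translations $t_n : (x,v) \mapsto (x + 2nK', v)$ and reflections $r_n : (x,v) \mapsto (2K - x + 2nK', v)$, both indexed by $n \in \mathbf{Z}$. Hypothesis (i) of Lemma \ref{KEY1} is immediate, because the images $g\Sigma_\Phi$ are the open strips of width $K'$ that tile $\mathbf{R}^2$. Hypothesis (ii) follows because the only defining relations $s_1^2 = s_2^2 = 1$ have even length, so the assignment $\eta(s_i) = -1$ extends to a (unique) character with $\eta(t_n) = 1$ and $\eta(r_n) = -1$.

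Next I would match the piecewise prescription in the statement to the general symmetrization (\ref{SMTRZ}). On $[K + 2nK', K + (2n+1)K') \times \mathbf{R}$ the unique $g \in G$ sending $\Sigma_\Phi$ onto this strip is $t_n$, with $T_{t_n} = I$; on $[K + (2n-1)K', K + 2nK') \times \mathbf{R}$ it is $r_n$, whose linear part is $\mathrm{diag}(-1,1)$, producing the minus signs. Exactly as in Corollary \ref{SVPCS}, only the first row of $\sigma$ and the first component of $\mu$ need to be modified, since $\sigma_{21}, \sigma_{22}, \mu_2$ depend solely on $v$ and are invariant under every $T_g$; choosing $U_x \equiv I$ then verifies the equivariance (\ref{SMTRY}) on the nose.

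Finally, I would feed the bounded Borel test function $F(x,v) = f(x-K)1_{\{x \in (K, K+K')\}}$, whose support lies in $\Sigma_\Phi$, into Theorem \ref{main}, obtaining $E[F(X_t, V_t) 1_{\{\tau_{\Sigma_\Phi} > t\}}] = \sum_{g \in G} \eta(g) E[F(g^{-1}(\hat{X}_t, V_t))]$. Splitting the sum over $G$ into translations (giving $\eta=+1$ and the first claimed sum) and reflections (giving $\eta=-1$ and the second) and computing $F \circ t_n^{-1}$ and $F \circ r_n^{-1}$ strip by strip recovers the displayed identity. The only real obstacle is bookkeeping—confirming that the tile-wise coefficients in the statement are literally the orbit sum (\ref{SMTRZ}) and that the character $\eta$ is well-defined on the infinite dihedral group—after which the corollary is an immediate specialization of Theorem \ref{main}.
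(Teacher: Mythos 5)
Your proof is correct and takes essentially the same route as the paper, whose entire argument is ``apply Theorem \ref{main} to the two parallel hyperplanes with $U_x \equiv I$''; you merely spell out the verification the paper leaves implicit (the infinite dihedral group, the disjoint tiling of the strips, the sign character, and the identification of the stated coefficients with the orbit sum (\ref{SMTRZ})). The only difference is that your $\Phi = \{((1,0),K),\ ((-1,0),-(K+K'))\}$ is the sign-consistent version, under the paper's convention $H^+_{\alpha,k} = \{\langle \alpha, x\rangle - k > 0\}$, of the paper's written choice $\{((1,0),K),\ ((-1,0),K+K')\}$, i.e.\ you have silently corrected a typo.
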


\begin{proof}
Apply Theorem \ref{main} to $ \Phi = \{ ( (1,0), K ), ((-1,0),K+K') \} $, 
with $ U_x \equiv I $.
\end{proof}

\subsection{Time-inhomogeneous extension}
The trick of Corollary \ref{SVPCS} further 
leads to a time-inhomogeneous extension 
of Theorem \ref{main}. Even 
more generally, we have the following 
\begin{thm}\label{exmain}
We reset $ X $ and $ V $ in (\ref{GSV1}) as follows:
$ X $ is $ \mathbf{R}^{d_1} $ valued and $ V $ is $ \mathbf{R}^{d_2} $-valued, 
with $ \sigma_{11}: \mathbf{R}^{d_1} \times \mathbf{R}^{d_2} \to 
\mathbf{R}^{d_1} \otimes \mathbf{R}^{d_1} $, 
$ \sigma_{2i} : \mathbf{R}^{d_2} 
\to \mathbf{R}^{d_2} \otimes \mathbf{R}^{d_i} $, $ i=1,2 $, 
$ \mu_1: \mathbf{R}^{d_1} \otimes \mathbf{R}^{d_2} 
\to \mathbf{R}^{d_2} $, and
$ \mu_2: \mathbf{R}^{d_2} \to \mathbf{R}^{d_2} $, all piecewise continuous,
and $ W $, $ d_1 $-dimensional, and $ B $, $ d_2 $-dimensional, 
are mutually independent Wiener processes.  
For the SDE for $ V $, we simply assume that $ \sigma_{2i} $,$ i=1,2 $
and $ \mu_2 $ are given so that there is a unique strong solution. 
Suppose that $ \Phi_0 \subset \mathbf{R}^{d_1} \otimes \mathbf{R} $ 
satisfies the assumptions of Lemma \ref{KEY1}. Let $ G $ be 
the reflection group associated with $ \Phi $. Then we still have 
(\ref{STH6})\footnote{
Precisely speaking, the group $ G $ should be embedded in Affine group 
in $ \mathbf{R}^{d_1 + d_2} $}.
\end{thm}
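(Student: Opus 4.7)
The plan is to condition on the trajectory of $ V $ and reduce the assertion to a conditional, time-inhomogeneous analogue of Theorem \ref{main} living in $ \mathbf{R}^{d_1} $. Because $ V $ satisfies an autonomous SDE driven solely by $ B $ (independent of $ W $), we may work on the $\sigma$-algebra $ \mathcal{F}^V := \sigma(V_s : s \geq 0) $; conditional on $ \mathcal{F}^V $, the $ X $-equation becomes a time-inhomogeneous SDE in $ \mathbf{R}^{d_1} $ driven by $ W $, with random, $ V $-measurable coefficients $ \sigma_{11}(\cdot, V_\cdot) $ and $ \mu_1(\cdot, V_\cdot) $. Under the footnoted embedding $ G \hookrightarrow \mathrm{Aff}(\mathbf{R}^{d_1 + d_2}) $, each $ g $ acts trivially on the $ V $-coordinates, so the hyperplanes of $ \Phi_0 $ lift to cylinders $ H_{\alpha, k} \times \mathbf{R}^{d_2} $ that are respected by the product structure, and the character $ \eta $ extends unchanged.

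I would then define $ \tilde{\sigma}_{11} $ and $ \tilde{\mu}_1 $ by applying the symmetrization (\ref{SMTRZ}) fiberwise in the $ v $ argument, and extend the affine symmetry lemma preceding Theorem \ref{main} to this conditional setting; its proof uses only It\^o's formula and uniqueness of weak solutions, both of which survive when the coefficients acquire exogenous measurable dependence on the $ V $-path. The algebraic computation carried out in the proof of Theorem \ref{main} then transfers verbatim once the extra $ v $-argument is frozen, producing
\begin{equation*}
\tilde{\sigma}_{11}(hx, v) = T_h \, \tilde{\sigma}_{11}(x, v) \, U_{hx}, \qquad \tilde{\mu}_1(hx, v) = T_h \, \tilde{\mu}_1(x, v)
\end{equation*}
for every $ h \in G $ and every $ v $. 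Applied at points $ x $ of a hyperplane in $ \Phi_0 $ and combined with the strong Markov property of $ \tilde{X} $ at $ \tau_{\alpha, k} $, this yields the PCS (\ref{PCSd}) for $ \tilde{X} $ at each hyperplane of $ \Phi_0 $, conditionally on $ \mathcal{F}^V $.

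With conditional PCS in hand, Lemma \ref{KEY1} and the theorem immediately following it apply conditionally on $ \mathcal{F}^V $ — their proofs use only PCS and the abstract group/character structure, with no time-homogeneity needed. This yields
\begin{equation*}
E \bigl[ f(\tilde{X}_t) 1_{\{\tau_{\Sigma_\Phi} > t\}} \,\big|\, \mathcal{F}^V \bigr] = \sum_{g \in G} \eta(g) \, E \bigl[ f(g^{-1} \tilde{X}_t) \,\big|\, \mathcal{F}^V \bigr].
\end{equation*}
On the event $ \{\tau_{\Sigma_\Phi} > t\} $, the coefficients governing $ X $ and $ \tilde{X} $ coincide (as $ \tilde{\sigma}_{11} $ and $ \tilde{\mu}_1 $ agree with $ \sigma_{11} $ and $ \mu_1 $ on $ \Sigma_\Phi \times \mathbf{R}^{d_2} $) and $ X_0 = \tilde{X}_0 \in \Sigma_\Phi $, so $ \tilde{X} $ may be replaced by $ X $ on the left-hand side; taking expectations then produces (\ref{STH6}).

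The main obstacle is making the conditioning on $ \mathcal{F}^V $ rigorous within the weak-solution framework adopted at the start of section \ref{SYMRZ}. One must exhibit a regular conditional distribution of $ (W, \tilde{X}) $ given $ \mathcal{F}^V $ under which the random, time-inhomogeneous SDE for $ \tilde{X} $ admits a unique weak solution for $ P $-almost every realization of $ V $, with joint measurability in $ (t, \omega) $ sufficient to legitimize integration back over $ V $. The ellipticity of $ \sigma \sigma^{*} $ assumed in section \ref{SYMRZ}, together with the piecewise continuity of the symmetrized coefficients, is precisely what is needed to invoke such a selection argument; once this technicality is secured, the rest of the proof is the algebraic symmetry manipulation already carried out for Theorem \ref{main}.
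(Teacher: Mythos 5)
Your route reaches the right conclusion, but it is genuinely different from, and much heavier than, the paper's argument. The paper proves Theorem \ref{exmain} in two lines: it lifts the data by setting $\Phi = \{((\alpha,0),k) \in \mathbf{R}^{d_1+d_2}\times\mathbf{R} : (\alpha,k)\in\Phi_0\}$, so each hyperplane of $\Phi_0$ becomes a ``cylinder'' hyperplane in $\mathbf{R}^{d_1+d_2}$ whose reflection acts trivially on the $V$-coordinates, and then applies Theorem \ref{main} directly to the time-homogeneous joint diffusion $(X,V)$. Because the $V$-coefficients do not depend on $x$ and the lifted reflections fix the $v$-components, the symmetrization (\ref{SMTRZ}) of the joint system only alters the $\sigma_{11}$ and $\mu_1$ blocks (exactly your fiberwise symmetrization), and the symmetry check (\ref{SMTRY}) is the same algebra as in Theorem \ref{main}; no time-inhomogeneity and no conditioning ever enter. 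You instead condition on $\mathcal{F}^V$, freeze the $V$-path, and re-derive a conditional, time-inhomogeneous version of the affine lemma, of the PCS, and of Lemma \ref{KEY1}. That can be made to work (Lemma \ref{KEY1} indeed uses only the PCS and the group structure, and the affine-lemma computation survives exogenous time dependence), and it would buy you something the paper's proof does not directly give, namely coefficients driven by a general exogenous path rather than through the Markovian pair $(X,V)$. But the step you label ``the main obstacle'' is exactly where the real work sits: you need a regular conditional distribution under which $W$ is still a Brownian motion, and conditional weak existence and uniqueness for the frozen-path SDE whose coefficients $(t,x)\mapsto \sigma_{11}(x,V_t(\omega))$ are merely measurable in $t$; the standing assumptions of section \ref{SYMRZ} concern the state-dependent homogeneous coefficients and do not hand you this for free. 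The paper's lifting sidesteps all of this (its only implicit burden is weak uniqueness for the joint $(X,V)$ system, which is already assumed in the setting of Theorem \ref{main}), so if you want a complete proof you should either discharge the conditioning technicalities in detail or simply adopt the product-space reduction.
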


\begin{proof}
Set $ \Phi = \{ ( (\alpha, 0),k) \in \mathbf{R}^{d_1 + d_2} :
(\alpha, k) \in \Phi_0 \} $, and apply Theorem \ref{main}.  
\end{proof}

\begin{corollary}\label{TIH}
Theorem \ref{main} is generalized to time-inhomogeneous cases:
$ \sigma $ and $ \mu $ can be time-dependent.
\end{corollary}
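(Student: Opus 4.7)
The plan is to reduce the time-inhomogeneous case directly to Theorem \ref{exmain} via the classical trick of turning time into a state variable. Given a time-inhomogeneous SDE
\[
dX_t = \sigma(t,X_t)\,dW_t + \mu(t,X_t)\,dt
\]
on $\mathbf{R}^d$, I would introduce the deterministic clock $V_t := t$, which solves the trivial SDE $dV_t = dt$. The augmented process $(X_t,V_t)$ then satisfies a time-homogeneous system of exactly the shape considered in Theorem \ref{exmain}, with $d_1 = d$, $d_2 = 1$, and coefficients
\[
\sigma_{11}(x,v) := \sigma(v,x), \quad \mu_1(x,v) := \mu(v,x), \quad \sigma_{21} \equiv \sigma_{22} \equiv 0, \quad \mu_2 \equiv 1.
\]
The SDE for $V$ has the unique strong solution $V_t = V_0 + t$, so the hypothesis of Theorem \ref{exmain} on the background process is satisfied for free.

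Next, I would take $\Phi_0 \subset \mathbf{R}^{d_1} \times \mathbf{R}$ to be the family of hyperplanes originally supplied in Theorem \ref{main}, regarded as acting only on the $x$-coordinates; the $V$-coordinate is untouched by every reflection. Hence the group $G$, the character $\eta$, the domain $\Sigma_{\Phi_0}$, and the hypotheses of Lemma \ref{KEY1} transfer verbatim. Applying the symmetrization rule (\ref{SMTRZ}) to the augmented coefficients $(\sigma_{11},\mu_1)$ produces $(\tilde\sigma_{11}(x,v),\tilde\mu_1(x,v))$; substituting back $v = t$ yields precisely the time-dependent symmetrizations
\[
\tilde\sigma(t,x) = \sum_{g \in G} T_g\,\sigma(t, g^{-1}x)\,1_{\{x \in g\Sigma_{\Phi_0}\}}U_x, \quad
\tilde\mu(t,x) = \sum_{g \in G} T_g\,\mu(t, g^{-1}x)\,1_{\{x \in g\Sigma_{\Phi_0}\}}
\]
that one would naively write down for the original time-inhomogeneous problem.

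Finally, because the reflections fix the $V$-coordinate, the hitting time $\tau_{\Sigma_{\Phi_0}}$ of the augmented process coincides with that of $X$ alone, and the identity (\ref{STH6}) coming from Theorem \ref{exmain} collapses, upon restoring $v = t$, into the desired time-inhomogeneous analogue on $\mathbf{R}^d$. The main obstacle is purely notational: one has to be careful that the embedding of $G$ into the affine group of $\mathbf{R}^{d+1}$ (acting trivially on the time component) is compatible with the block-triangular form of the SDE for $(X,V)$, so that the orthogonal factors $T_g$ extended by the identity on the $V$-coordinate still preserve the covariance structure. Once this bookkeeping is verified, no further computation is required.
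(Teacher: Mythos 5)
Your proposal is correct and follows exactly the paper's route: the paper's proof of Corollary \ref{TIH} is precisely to take $d_2=1$, $\sigma_{21}=\sigma_{22}=0$, $\mu_2=1$ in Theorem \ref{exmain}, i.e.\ to use the deterministic clock $V_t=t$ as the auxiliary coordinate, with $\Phi_0$ the original hyperplane family acting only on the $x$-variables. Your write-up simply spells out the bookkeeping (the reflections fixing the time coordinate, the collapse of (\ref{STH6}) upon $v=t$) that the paper leaves implicit.
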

\begin{proof}
Take $ d_2 = 1 $ and $ \sigma_{21} = \sigma_{22} = 0 $ (and $ \mu_2 = 1 $)
in Theorem \ref{exmain}.   
\end{proof}
\subsection{Curved boundary}\label{CVB}
Let $ D \subset \mathbf{R}^d $ be a domain 
homeomorphic to a $ \Sigma_{\Phi} $
satisfying the assumptions in Lemma \ref{KEY1}. 
The scheme can be applied to the numerical 
calculation of such an expectation as  
\begin{equation*}
E [ f (X_t) 1_{ \{X_t \in D\}} 1_{\{ \tau_D >t\}}],
\end{equation*}
where 
\begin{equation*}
\tau_D := \inf \{ t> 0 : X_t \not\in D \}. 
\end{equation*}
In fact, at least heuristically, 
$ F: D \to \Sigma_\Phi $ being a homeomorphism, 
$ Y_t = {F} (X_t) $ can be realized as a solution to a
time-inhomogeneous SDE in $ \Sigma_{\Phi} $, 
and we can apply Corollary \ref{TIH}. Note that in the scheme 
the information of the outside of $ \Sigma_\Phi $ of 
the coefficients of SDE is totally irrelevant. 

\subsection{Time-dependent boundary}\label{TDB}
The scheme is also applicable to 
the cases where knock-out boundaries are time-dependent. 

Let $ \alpha_i : [0,\infty) \to \mathbf{R}^d $, $ i=1, \cdots, d $ 
be smooth maps such that the matrix 
$ A (t) = [\alpha_1 (t), \cdots, \alpha_d (t) ] $ 
is invertible for all $ t $ and is at most linear growth in $ t $. 
For each $ t \geq 0 $, put
$ \Phi (t) = \{ (\alpha_i(t), k_i ) \} $\footnote{Without loss of generality, 
we may assume that $ k_i $, $ i=1,2, \cdots, d $ are constants.} and 
let $ \Sigma (t) $ be the intersection of the half spaces 
associated with $ \Phi (t) $. We will briefly discuss how 
it can be transformed to the previous problem, though heuristically. 

Let $ C : [0,\infty) \to \mathbf{R}^d \otimes \mathbf{R}^d $ be 
a solution (unique up to the initial point) to 
the following matrix valued linear differential equation:
\begin{equation}\label{ODE1}
C' (t) = C_t A'(t) \{A (t) \}^{-1}.  
\end{equation}
Then $ C (t) \alpha_i (t) = 
C(0) \alpha_i (0) =: \alpha_i $ for $ i=1,2, \cdots, d $ 
since (\ref{ODE1}) is equivalent to 
\begin{equation*}
(C(t) A(t))' = 0.
\end{equation*}
Moreover, since
\begin{equation*}
\langle \alpha_i (t), x \rangle = 
\langle C (t) \alpha_i (t), C^* (t)^{-1} x \rangle 
= \langle \alpha_i,C^* (t)^{-1} x \rangle, 
\end{equation*}
we see that 
$ x \mapsto C^* (t)^{-1} x $ transforms $ \Sigma (t) $ 
to a time-independent domain $ \Sigma_{\Phi^*} $ with
$ \Phi^* = \{ ( \alpha_i, k_i), i=1,2, \cdots, d \} $. 
Now the problem turns into the pricing of options 
written on $  C^* (t)^{-1} X_t $ knocked out at the boundary of 
$ \Sigma_{\Phi^*} $. 
As we have commented in Remark \ref{PRCLR}, 
we can assume that $ \Phi^* $ satisfies 
the assumptions in Lemma \ref{KEY1}. 
Combining them all, we see that 
Corollary \ref{TIH} is now applicable.

\end{document}